\newtheorem{theorem}	 			{Theorem}[section]
\newtheorem{definition}
[theorem]	{Definition}}
\newtheorem{remark}		[theorem]
{Remark}}
\newtheorem{example}		[theorem]
{Example}}
\theoremstyle{break}
{\theorembodyfont{\rmfamily} }
\newenvironment{proof}{\noindent {\em {Proof:}}}{$\blacksquare$\vskip
\belowdisplayskip}
\newenvironment{prevproof}[2]{\noindent {\em {Proof of
{#1}~\ref{#2}:}}}{$\blacksquare$\vskip \belowdisplayskip}
\newcommand{\argmax}{\operatornamewithlimits{argmax}}
\newcommand{\prob}[2][]{\text{\bf Pr}\ifthenelse{\not\equal{}{#1}}{_{#1}}{}\!\left[#2\right]}
\newcommand{\expect}[2][]{\text{\bf E}\ifthenelse{\not\equal{}{#1}}{_{#1}}{}\!\left[#2\right]}
\def\sm{\setminus}
\def\sse{\subseteq}
\def\eps{\epsilon}
\newcommand{\tfm}{(\allocs,\prices,\burn)}
\newcommand{\bid}{b}
\newcommand{\bids}{{\mathbf \bid}}
\newcommand{\bidsmt}{{\mathbf \bid}_{-t}}
\newcommand{\bidt}[1][t]{{\bid_{#1}}}
\newcommand{\val}{v}
\newcommand{\vals}{{\mathbf \val}}
\newcommand{\valt}[1][t]{{\val_{#1}}}
\newcommand{\alloc}{x}
\newcommand{\allocs}{{\mathbf \alloc}}
\newcommand{\alloct}[1][t]{\alloc_{#1}}
\newcommand{\price}{p}
\newcommand{\prices}{{\mathbf \price}}
\newcommand{\pricet}[1][t]{\price_{#1}}
\newcommand{\burn}{q}
\newcommand{\burnt}[1][t]{\burn_{#1}}
\newcommand{\blocks}{\mathcal{B}}
\newcommand{\blockset}{\mathcal{B}}
\newcommand{\vbp}{v_{BP}}
\newcommand{\vbphat}{\hat{v}_{BP}}
\newcommand{\basefee}{r}
\title{Transaction Fee Mechanism Design with Active Block
  Producers\thanks{First version published July 6, 2023.}}
\author{Maryam Bahrani\thanks{a16z crypto. Email:
    \texttt{mbahrani@a16z.com}.} \and Pranav Garimidi\thanks{a16z
    crypto. Email: \texttt{pgarimidi@a16z.com}.} \and Tim
  Roughgarden\thanks{Columbia University \& a16z crypto. 
Author's research at Columbia
  University supported in part by NSF awards CCF-2006737
and CNS-2212745.
Email: \texttt{troughgarden@a16z.com}. 
}}
\begin{document}

\maketitle

\begin{abstract}
The incentive-compatibility properties of blockchain transaction fee
mechanisms have been investigated with 
{\em passive} block producers that are motivated purely by the net
rewards earned at the consensus layer. This paper introduces a model
of {\em active} block producers that have their own private valuations
for blocks (representing, for example, additional value derived from
the application layer).  The block producer surplus in our model can
be interpreted as one of the more common colloquial meanings of the
term ``MEV.''

The main results of this paper show that transaction fee mechanism
design is fundamentally more difficult with active block producers
than with passive ones: with active block producers, no non-trivial or
approximately welfare-maximizing transaction fee mechanism can be
incentive-compatible for both users and block producers.
These results can be interpreted as a mathematical justification for
the current interest in augmenting transaction fee mechanisms with
additional components such as order flow auctions, block producer
competition, trusted hardware, or cryptographic techniques.
\end{abstract}

\section{Introduction}

\subsection{Transaction Fee Mechanisms for Allocating Blockspace}

Blockchain protocols such as Bitcoin and Ethereum process transactions
submitted by users, with each transaction advancing the ``state'' of
the protocol (e.g., the set of Bitcoin UTXOs, or the state of the
Ethereum Virtual Machine).  Such protocols have finite processing
power, so when demand for transaction processing exceeds the available
supply, a strict subset of the submitted transactions must be chosen
for processing.  To encourage the selection of the ``most valuable''
transactions, the transactions chosen for processing are typically
charged a transaction fee. The component of a blockchain protocol
responsible for choosing the transactions to process and what to
charge for them is called its {\em transaction fee mechanism (TFM)}.

Previous academic work on TFM design (surveyed in Section~\ref{ss:rw})
has focused on the game-theoretic properties of different designs,
such as incentive-compatibility from the perspective of users
(ideally, with a user motivated to bid its true value for the
execution of its transaction), of block producers (ideally, with a
block producer motivated to select transactions to process as
suggested by the TFM), and of cartels of users and/or block producers.
Discussing incentive-compatibility requires defining utility
functions for the relevant participants.
In most previous works on TFM design (and in
this paper), users are modeled as having a private value for
transaction inclusion and a quasi-linear utility function (i.e., value
enjoyed minus price paid). In previous work---and, crucially, unlike
in this work---a block producer was modeled as {\em passive}, meaning
its utility function was the net reward earned (canonically, the
unburned portion of the transaction fees paid by users, possibly plus
a block reward).

While this model is a natural one for the initial investigation of the
basic properties of TFMs, it effectively assumes that block producers
are unaware of or unconcerned with the semantics of the transactions
that they process---that there is a clean separation between users
(who have value only for activity at the application layer) and block
producers (who, if passive, care only about payments received at the
consensus layer).

\subsection{MEV and Active Block Producers}\label{ss:mev}

It is now commonly accepted that, at least for blockchain protocols
that support a decentralized finance (``DeFi'') ecosystem, there are
unavoidable interactions between the consensus layer (block producers)
and the application layer (users), and specifically with block
producers deriving value from the application layer that depends on
which transactions they choose to process (and in which order).  For a
canonical example, consider a transaction that executes a trade on an
automated market maker (AMM), exchanging one type of token for another
(e.g., USDC for ETH).  The spot price of a typical AMM moves with
every trade, so by executing such a transaction, a block producer may
move the AMM's spot price out of line with the external market (e.g.,
on centralized exchanges (CEXs)), thereby opening up an arbitrage
opportunity (e.g., buying ETH on a CEX at the going market price and
then selling it on an AMM with a larger spot price). The block
producer is uniquely positioned to capture this arbitrage opportunity,
by executing its own ``backrunning'' transaction (i.e., a trade in the
opposite direction) immediately after the submitted trade transaction.

The first goal of this paper is to generalize the existing models of
TFM design in the minimal way that accommodates {\em active} block
producers, meaning block producers with a utility function that
depends on both the transactions in a block (and their order) and
the net fees earned. Specifically, in addition to the standard
private valuations for transaction inclusion possessed by users, the
block producer will have its own private valuation, which is an
abstract function of the block that it publishes. 
We then assume that a block producer acts to maximize its
{\em block
  producer surplus (BPS)}, meaning its private value for the published
block plus any addition profits (or losses) from fees (or burns).
In the interests of a simple but general model, we deliberately
avoid microfounding the private valuation function of a block producer
or committing to any specifics of the application layer. Our model
captures, in particular, canonical on-chain DeFi opportunities such as
arbitrage and liquidation opportunities, but a block producer's
valuation can reflect arbitrary preferences, perhaps derived also from 
off-chain activities (e.g., a bet with a friend) or subjective
considerations.

The extraction of application-layer value by block producers, in DeFi
and more generally, was first studied by Daian et al.~\cite{flashboys}
under the name ``MEV'' (for ``maximal extractable value,'' n\'ee
``miner extractable value'').  At this point, the term has transcended
any specific definition---in both the literature and popular
discourse, it is used, often informally, to refer to a number of
related but different concepts.  We argue that our definition of BPS
captures, in a precise way and in a concrete economic model, one of
the more common colloquial meanings of the term ``MEV.''  There are,
of course, many further interesting aspects of MEV, but these are
outside the scope of this paper.

In this paper, we treat a block producer as a single entity that
publishes a block based on the transactions that it is aware of.  This
would be an accurate model of block production, as carried out by miners in
proof-of-work protocols and validators in proof-of-stake protocols, up
until a few years ago.  More recently, especially in the Ethereum
ecosystem, block production has evolved into a more complex process,
typically involving ``searchers'' (who identify opportunities for
extraction from the application layer), ``builders'' (who assemble
such opportunities into a valid block), and ``proposers'' (who
participate directly in the blockchain protocol and make the final
choice of the published block).  One interpretation of a block
producer in our model is as a vertically integrated searcher, builder,
and proposer.  In future work, we will consider more fine-grained
models of the block production process. As we'll see in this paper,
there is much to say about transaction fee mechanism design with
active block producers already when block production is carried out by
a single entity.

\subsection{Overview of Results}\label{ss:overview}

Our starting point is the model for transaction fee mechanism design
defined in~\cite{roughgarden2021transaction}.  In this model, each user has a private
valuation for the inclusion of a transaction in a block, and submits a
bid along with its transaction.  As in~\cite{roughgarden2021transaction}, we consider
TFMs that choose the included transactions and payments based solely
on the bids 
of the pending transactions (as opposed to, say,
based also on something derived from the semantics of those
transactions).  A block producer
publishes any block that it wants, subject to feasibility (e.g., with
the total size of the included transactions respecting some maximum
block size).
A TFM is said to be {\em dominant-strategy
  incentive-compatible (DSIC)} if every user has a
dominant (i.e., always-optimal) bidding strategy.
The DSIC property is often associated with a good
``user experience (UX),'' in the sense that each user has an obvious
optimal bid.
In~\cite{roughgarden2021transaction}, a TFM was said to be {\em incentive-compatible for
  myopic miners (MMIC)} if it expects a block producer to publish a
block that maximizes the net fees earned (at the consensus layer).
Here, we introduce an analogous definition that accommodates active
block producers: We call a TFM {\em incentive-compatible for block
  producers (BPIC)} if it expects a block producer to publish a block
that maximizes its private valuation 
plus the net fees
earned.  An ideal TFM would satisfy, among other properties, both DSIC
and BPIC.

This paper shows that TFM design is fundamentally more difficult with
active block producers than with passive ones.  Our first result
(Theorem~\ref{t:main}) is a proof that, with active block producers,
{\em no} non-trivial TFM satisfies both DSIC and BPIC, where
``non-trivial'' means that users must at least in some cases pay a
nonzero amount for transaction inclusion. (In contrast, with passive
block producers, the ``tipless mechanism'' suggested
in~\cite{roughgarden2021transaction} is non-trivial and satisfies both
DSIC and BPIC; the same is true of the EIP-1559 mechanism of
Buterin et al.~\cite{buterin2019eip}, provided the mechanism's base fee is not
excessively low~\cite{roughgarden2021transaction}.) In
particular, the EIP-1559 and tipless mechanisms fail to satisfy DSIC
and BPIC when block producers can be active. Intuitively, for these
mechanisms, a user might be motivated to underbid in the hopes of an
effective subsidy by the block producer (who may include the
transaction anyways, if it derives outside value from it).  In fact,
we show in Theorem~\ref{t:dsic_approx} that, in a quantitative sense,
the loss in DSIC of these mechanisms scales precisely with the value
extractable by the block producer from users' transactions.
Theorem~\ref{t:main} shows that this breakdown in
incentive-compatibility is not a failure of these TFMs per se, but
rather stems from a fundamental obstacle to TFM design with active
block producers.

The second main result of the paper (Theorem~\ref{t:welfare}) proves
that TFMs that do not charge non-zero transaction fees---and in
particular (by Theorem~\ref{t:main}), TFMs that are both DSIC and
BPIC---cannot provide any meaningful welfare-maximization guarantees.
Intuitively, the issue is the lack of alignment between the
preferences of users and of the block producer: If a block producer
earns no transaction fees from any block, it might choose a block with
non-zero private value but only very low-value transactions over one
with no private value but very high-value transactions.

Impossibility results like Theorems~\ref{t:main} and~\ref{t:welfare} are
meant to guide rather than discourage further work on the problem, by
highlighting the paths forward along which positive progress might be
made.  In fact, these results can be interpreted as a mathematical
justification for the community's current interest in augmenting
transaction fee mechanisms with additional components such as order
flow auctions (e.g., \cite{flashbots1}), block producer competition
(e.g., \cite{dom_mev_burn}), trusted hardware (e.g.,
\cite{flashbots2}), or cryptographic techniques (e.g.,
\cite{encrypted_mempools}).

\subsection{Related Work}\label{ss:rw}

\paragraph{Defining MEV.}
Daian et al.~\cite{flashboys} introduced the notion of miner/maximal
extractable value. They defined MEV as the value that miners or
validators could obtain by manipulating the transactions in a
block. Since this work there have been many follow-up works attempting
to formalize MEV and analyze its effects in both theory and
practice. Attempts to give exact theoretical characterizations of MEV
appear in
\cite{towardsMEVFormalization,crossChainMEV,Bartoletti,BabelDKJ21}. Broadly,
these works define MEV by defining sets of valid transaction sequences
and allowing the block producer to maximize their value over these
sequences.  These definitions are very general, but in exchange have
to this point proved analytically intractable.
Several empirical papers study the impact and magnitude of
MEV 
using heuristics applied to on-chain data
\cite{QinZG22,QinZLG21,TorresCS21}. Another line of work
\cite{tarunCFMM,HeimbachW22,BartolettiCL22} studies MEV in specific
contexts, such as for arbitrage in AMMs, in which case it is possible
to characterize how much MEV can be realized from certain
transactions. In particular, Kulkarni et al.~\cite{tarunCFMM} give
formal statements on how, under different AMM designs, MEV affects the
social welfare of the overall system.

\paragraph{General TFM literature.}
The model in this paper is closest to the one used by
Roughgarden~\cite{roughgarden2021transaction} to analyze (with passive
block producers) the economic properties of the EIP-1559
mechanism~\cite{buterin2019eip}, the TFM used currently in the
Ethereum blockchain.
Precursors to that work (also with passive block producers) include
studies of a ``monopolistic price'' transaction fee
mechanism~\cite{LSZ19,Y18} (also considered recently by
Nisan~\cite{N23}), and work of Basu et al.~\cite{beos} that proposed a
more sophisticated variant of that mechanism.  There have also been
several follow-up works to~\cite{roughgarden2021transaction} that use
similar models (again, with passive block producers).  Chung and
Shi~\cite{chung2023foundations} prove impossibility results showing
that the incentive-compatible guarantees of the EIP-1559 mechanism are
in some sense optimal.  There have also been attempts to circumvent
this impossibility result by relaxing the notion of incentive
compatibility~\cite{chung2023foundations,gafni2022greedy}, using
cryptography~\cite{shi2022can}, considering a Bayesian
setting~\cite{zhao2022bayesian}, or mixtures of these
ideas~\cite{wu2023maximizing}. 
Other recent works~\cite{ferreira2021dynamic,monnot_aft} 
study the dynamics of the base fee in the 
EIP-1559 mechanism. 

A more distantly related line of work involves mechanism design in the
presence of strategic auctioneers.
Akbarpour~and~Li~\cite{akbarpour-credible-journal} introduce the
notion of \emph{credible} mechanisms, where any profitable deviations
by the auctioneer can be detected by at least one user.  While similar
in spirit to the concept of BPIC introduced here (and the special case
of MMIC introduced in~\cite{roughgarden2021transaction}), there are
several important differences.  For example, the theory of credible
mechanisms assumes fully private communication between bidders and the
auctioneer and no communication among bidders, whereas TFM bids are
commonly collected from a public mempool. There is also a line of follow-up
work that takes advantage of cryptographic primitives to build credible
auctions on the
blockchain~\cite{ferreira2020credible,essaidi2022credible,chitra2023credible,ferreira2022credible}.

\section{The Model}\label{s:model}

This section defines transaction fee mechanisms, the relevant players
and their objectives, and notions of incentive-compatibility for users
and for block producers. Our model can be viewed as the salient parts
of the one in~\cite{roughgarden2021transaction}, augmented with the
necessary ingredients to discuss active block producers.

\subsection{The Players and Their Objectives}\label{ss:utilities}

\paragraph{Users.}
Users submit {\em transactions} to the blockchain protocol. The
execution of a transaction updates the state of the protocol (e.g.,
users' account balances). The rules of the protocol specify whether a
given transaction is {\em valid} (e.g., whether it is accompanied by
the required cryptographic signatures). From now on, we assume that
all transactions under consideration are valid.

We assume that
each user submits a
single transaction~$t$ and has a nonnegative {\em valuation}~$v_t$,
denominated in a base currency like USD or ETH, for its execution in
the next block. This valuation is {\em private}, in the sense that it
is initially unknown to all other parties.  We assume that the utility
function of each user---the function that the user acts to
maximize---is quasi-linear, meaning that its utility is either~0 (if
its transaction is not included in the next block) or~$v_t-p$ (if
its transaction is included and it must pay a fee of~$p$). The primary
focus of this paper is on impossibility results, and this restrictive
combination of myopic users, inclusion valuations, and quasi-linear
utility only makes such results stronger.

\vspace{-.1in}

\paragraph{Blocks.} 
A {\em block} is a finite ordered list of transactions. A {\em feasible
  block} is a block that respects any additional constraints imposed
by the protocol. 
For example, if transactions have sizes (e.g., the ``gas limit'' in
Ethereum) and the protocol specifies a maximum block size, then
feasible blocks might be defined as those that comprise only valid
transactions and also respect the block size limit.

\vspace{-.1in}

\paragraph{Block producers (BPs).} We consider blockchain protocols
for which the contents of each block are selected by a single entity,
which we call the {\em block producer (BP)}. We focus on the
decision-making of the BP that has been chosen at a particular moment
in time (perhaps using a proof-of-work or proof-of-stake-based
lottery) to produce the next block. We assume that whatever block
the BP chooses is in fact published, with all the included
transactions finalized and executed.

A BP chooses a block~$B$ from some abstract non-empty set~$\blocks$ of
feasible blocks, called its {\em blockset}.  For example, the
set~$\mathcal{B}$ might consist of all the feasible blocks that
comprise only transactions that the BP knows about (perhaps from a
public mempool, or perhaps from private communications) along with
transactions that the BP is in a position to create itself (e.g., a
backrunning transaction).  As with users, we model the preferences of
a BP with a quasi-linear utility function, meaning the difference
between its private value for a block (again, denominated in a base
currency like USD or ETH) minus the (possibly negative) payment that
it must make.  Unlike with users, to avoid modeling any details of why
a BP might value a block (e.g., due to the extraction of value from
the application layer), we allow a BP to have essentially arbitrary
preferences over blocks. More formally, we assume that a BP has a
private valuation that is an arbitrary (real-valued) function~$\vbp$
over blocks, and that the BP acts to maximize its {\em block producer
  surplus (BPS)}:
\[
\underbrace{\vbp(B) + \text{net fees earned}}_{\text{block producer
    surplus (BPS)}}.
\]
For example, in a mechanism like EIP-1559 (see Examples~\ref{ex:1559}
and~\ref{ex:15592} below), ``net fees''
would correspond to the unburned portion of the transaction fees paid
by users.\footnote{We assume that the marginal cost to a BP of
  including a transaction in a block (e.g., due to an increase in orphaning
  probability in a longest-chain protocol) is~0.  Non-zero marginal
  costs can be easily incorporated into our model,
  following~\cite{roughgarden2021transaction}.}

Previous work on TFMs considered the special case of {\em passive} BPs,
meaning BPs with a block-independent valuation (perhaps everywhere
equal to~0, or to the value of a fixed block reward).  This special
case effectively assumes that a BP has no out-of-protocol value for
any transaction. For a simple example of an {\em active} BP, consider
an {\em additive} valuation, meaning a function $\vbp$ of the form
\[
\underbrace{\vbp(B) := \sum_{t \in B} \mu_t}_{\text{additive valuation}},
\]
where ``$t \in B$'' sums over the transactions~$t$ in the block~$B$
and where the $\mu_t$'s are transaction-specific constants.
Intuitively, an additive valuation corresponds to a BP that extracts
value from each transaction independently, without regard to possible
interactions between the transactions in its block. For example, if
each transaction corresponds to a trade in a distinct AMM, the BP's
valuation might reasonably be modeled as additive.  Alternatively, if
the BP is focused on capturing a particular liquidation opportunity in
a lending platform, its valuation might reasonably be modeled as {\em
  single-minded}, meaning that for every block~$B$, $\vbp(B)$ is
either equal to~0 (if the BP does not capture the liquidation
opportunity in~$B$) or to some fixed constant~$\mu$ (otherwise,
where~$\mu$ is the value of the opportunity):
\[
\underbrace{
\vbp(B) = \left\{ 
\begin{array}{cl}
\mu & \text{if $B \in \mathcal{S}$}\\
0 & \text{otherwise}
\end{array},
\right.
}_{\text{single-minded valuation}}
\]
where~$\mathcal{S} \sse \blocks$ is a subset of feasible blocks
and~$\mu$ is a constant.

As noted in Section~\ref{ss:mev}, in this paper we primarily regard a
BP as a single entity.  If instead blocks are produced by a collection
of parties, $\vbp(B)$ should be interpreted as the combined value of
all these parties for a block. Our assumption that a BP will choose a
block to maximize its BPS translates to the assumption that the
collection of parties is colluding to maximize their total surplus.

\vspace{-.1in}

\paragraph{Holders.} The final category of participants, which are
non-strategic in our model but relevant for our definition of welfare
in Section~\ref{ss:welfare}, are the holders of the blockchain
protocol's native currency.  As we'll see in Section~\ref{ss:tfm},
TFMs are in a position to mint or burn this currency, which
corresponds to inflation or deflation, respectively.  We treat TFM
mints and burns as transfers from and to, respectively, the existing
holders of this currency. For example, when a TFM burns some of the
transaction fees paid by the users that created the included
transactions (as in the EIP-1559 mechanism), we interpret the outcome
as a transfer of funds from the TFM's users to the currency's holders.
Formally, we define the collective utility function of currency
holders to be the net amount of currency burned by a TFM.

\subsection{Welfare}\label{ss:welfare}

According to the principle of welfare-maximization, a scarce resource
like blockspace should be allocated to maximize the total utility of
all the ``relevant participants,'' which in our case includes the
users, the BP, and the currency holders.  Because all parties have
quasi-linear utility functions and all TFM transfers will be between
members of this group (from users to the BP, from the BP to holders,
etc.), the welfare of a block is simply the sum of the user and BP
valuations for it:
\begin{equation}\label{eq:welfare}
\underbrace{W(B) := \vbp(B) + \sum_{t \in B} v_t}_{\text{welfare of $B$}}.
\end{equation}
Holders are assumed to be passive and thus have no
valuations to contribute to the sum.\footnote{We stress that the
  welfare of a block~\eqref{eq:welfare} measures
the ``size of the pie'' and says nothing about 
how this welfare might be split between users, the BP, and
holders (i.e., about the size of each slice). Distributional
considerations are important, of course, but they are outside the
scope of this paper.}

\subsection{Transaction Fee Mechanisms}\label{ss:tfm}

The outcome of a transaction fee mechanism is a block to publish and a
set of transfers (user payments, burns, etc.) that will be made upon
the block's publication. 
In line with the preceding literature on TFMs
and the currently deployed TFM designs, we assume that each user that
creates a transaction~$t$
submits along with it a nonnegative
{\em bid}~$\bidt$ (i.e., willingness to pay),
and that a TFM bases its transfers on the set of
available transactions and the corresponding bids.  (The BP submits
nothing to the TFM.)  A TFM is defined primarily by its {\em payment}
and {\em burning} rules, which specify the fees paid by users and the
burned funds implicitly received by holders (with the BP pocketing the
difference).

\vspace{-.1in}

\paragraph{Payment and burning rules.}
The payment rule specifies the payments made by users in
exchange for transaction inclusion.
\begin{definition}[Payment Rule]\label{d:payment}
A {\em payment rule} is a function~$\prices$ that specifies a
nonnegative payment
$\pricet(B,\bids)$ for each transaction~$t \in B$ in a block~$B$,
given the bids~$\bids$ of all known transactions.
\end{definition}
The value of $\pricet(B,\bids)$ indicates the
payment from the creator of an included transaction~$t \in B$ to the
BP that published that block.  
(Or, if the rule is randomized, the expected payment.\footnote{We
  assume that users and BPs are risk-neutral when interacting with a
  randomized TFM.})
We consider only {\em individually
  rational} payment rules, meaning that $\pricet(B,\bids) \le \bidt$
for every included transaction $t \in B$.
We can interpret $\pricet(B,\bids)$ as 0 whenever $t \notin B$.
Finally, we assume that every creator of an included transaction has
the funds available to pay its full bid, if necessary (otherwise, the
block~$B$ should be considered infeasible).

The burning rule specifies how much money must be burned by a BP along
with the publication of a given block.\footnote{This differs
  superficially from the formalism in~\cite{roughgarden2021transaction}, in which a
  burning rule
  specifies per-transaction (rather than per-block) transfers from users
  (rather than the BP) to currency holders.
The payment rule here can be interpreted as the sum of the payment and
burning rules in~\cite{roughgarden2021transaction}, and the per-block burning rule here
can be interpreted as the sum of the burns of a block's transactions in~\cite{roughgarden2021transaction}.}
\begin{definition}[Burning Rule]\label{d:burning}
A {\em burning rule} is a function~$\burn$ that specifies a
nonnegative burn~$\burn(B,\bids)$ for a block~$B$, given the
bids~$\bids$ of all known transactions.
\end{definition}
The value of $\burnt(B,\bids)$ indicates the amount of money burned
(i.e., paid to currency holders) by the BP upon publication of the
block~$B$. (Or, if the rule is randomized, the expected
amount.)\footnote{An alternative to money-burning that has similar
  game-theoretic and welfare properties is to transfer~$\burn(B,\bids)$ to
  entities other than the BP, such as a foundation or the producers of
  future blocks.}  We assume that, after receiving users' payments for
the block, the BP has sufficient funds to pay the burn required of the
block that it publishes (otherwise, the block~$B$ should be considered
infeasible).

The most commonly used TFMs restrict their payment and burning rules
to depend only on the bids of the included transactions, as opposed to
the bids of all known (included and excluded) transactions.  The
positive results in this paper (Theorem~\ref{t:dsic_approx}
and Remark~\ref{rem:welfare_approx}) use only TFMs of this restricted type. The
negative results (Theorems~\ref{t:main} and~\ref{t:welfare}) apply
even to TFMs with payment and burning rules of the more general form
in Definitions~\ref{d:payment} and~\ref{d:burning}.\footnote{Chung and
  Shi~\cite{chung2023foundations} show that, in some scenarios, TFMs with general
  payment and burning rules can satisfy properties that cannot be
  achieved with payment and burning rules that depend only on the bids
  of the transactions chosen for execution.}

We stress that the payment and burning rules of a TFM are hard-wired
into a blockchain protocol as part of its code. This is why their
arguments---the transactions chosen for execution and their bids, and
perhaps (as in~\cite{chung2023foundations}) the bids of some
additional, not-to-be-executed transactions---must be publicly
recorded as part of the blockchain's history. (E.g., late arrivals
should be able to reconstruct users' balances, including any payments
dictated by a TFM, from this history.)  A BP cannot manipulate the
payment and burning rules of a TFM, except inasmuch as it can choose
which block~$B \in \blocks$ to publish.

\begin{example}[First-Price Auction (FPA)]\label{ex:fpa}
A {\em first-price auction (FPA)}, as used in the Bitcoin protocol and
pre-2021 Ethereum, simply charges users their bids. That is, the
burning rule~$q$ is identically zero, and the payment rule is
$\pricet(B,\bids) = \bidt$ for all $t \in B$.
\end{example}

\begin{example}[EIP-1559]\label{ex:1559}
The EIP-1559 mechanism~\cite{buterin2019eip} assumes that every
transaction~$t$ has a 
publicly known {\em size}~$s_t$ (e.g., the gas limit of an Ethereum
transaction). The mechanism is parameterized by a ``base fee''
$\basefee$, which for each transaction~$t$ (with size~$s_t$)
defines a reserve price of $\basefee \cdot s_t$. 
Like an FPA, this mechanism charges each user its bid:
$\pricet(B,\bids) = \bidt$ for all $t \in B$.
Unlike an FPA, the portion of this revenue generated by the base fee
goes to holders rather than the BP.  That is, the mechanism's burning
rule is~$\burn(B,\bids) = \sum_{t \in B} \basefee \cdot s_t$.
(We allow a BP to include transactions with $\bidt < \basefee \cdot s_t$,
but the BP must still burn the full amount $\basefee \cdot
s_t$; see also Remark~\ref{rem:constrained}.)
\end{example}

\vspace{-.1in}

\paragraph{Allocation rules.}
In our model, a BP has unilateral control over the block that it
chooses to publish.  Thus, a TFM's allocation rule---which specifies
the block that should be published, given all of the relevant
information---can only be viewed as a recommendation to a BP.  Because
the (suggested) allocation rule is carried out by the BP and not
by the TFM directly, it can sensibly depend on arguments not known to
the TFM (but known to the BP), specifically the BP's valuation~$\vbp$
and blockset~$\blocks$.
\begin{definition}[Allocation Rule]\label{d:allocation}
An {\em allocation rule} is a function~$\allocs$ that specifies a
block\\ $\allocs(\bids,\vbp,\blocks) \in \blocks$, given the bids~$\bids$ of all
known transactions, the BP valuation~$\vbp$, and the BP blockset~$\blocks$.
\end{definition}
An allocation rule~$\allocs$ induces per-transaction allocation
rules with, for a transaction~$t$, $\alloct(\bids,\vbp,\blocks)=1$ if
$t \in \allocs(\bids,\vbp,\blocks)$ and~0 otherwise.

\begin{definition}[Transaction Fee Mechanism (TFM)]\label{d:tfm}
A {\em transaction fee mechanism (TFM)} is a triple $\tfm$ in which
$\allocs$ is a (suggested) allocation rule, $\prices$ is a payment rule,
and $\burn$ is a burning rule.
\end{definition}
A TFM is defined relative to a specific block publishing opportunity.
A blockchain protocol is free to use different TFMs for different
blocks (e.g., with different base fees), perhaps informed by the
blockchain's past history.

\vspace{-.1in}

\paragraph{Utility functions and BPS revisited.}
With Definitions~\ref{d:payment}--\ref{d:tfm} in place, we can express
more precisely the strategy spaces and utility functions introduced in
Section~\ref{ss:utilities}.  We begin with an expression for the
utility of a user (as a function of its bid) for a TFM's outcome, under
the assumption that the BP always chooses the block suggested by the
TFM's allocation rule.
\begin{definition}[User Utility Function]\label{d:userutil}
For a TFM $\tfm$, BP valuation~$\vbp$, BP blockset~$\blocks$, and bids
$\bidsmt$ of other transactions,
the utility of the originator of a transaction~$t$
with valuation~$\valt$ and bid $\bidt$ is
\begin{equation}\label{eq:userutil}
u_t(\bidt) :=
\valt \cdot \alloct((\bidt,\bidsmt),\vbp,\blocks) - \pricet(B,(\bidt,\bidsmt)),
\end{equation}
where~$B:=\alloct((\bidt,\bidsmt),\vbp,\blocks)$.
\end{definition}
In~\eqref{eq:userutil}, we highlight the dependence of the utility
function on the argument that is directly under a user's control, the
bid~$\bidt$ submitted with its transaction.  

The BP's utility function, the block producer surplus, is then:
\begin{definition}[Block Producer Surplus (BPS)]\label{d:bputil}
For a TFM $\tfm$, BP valuation~$\vbp$, BP blockset $\blocks$,
and transaction bids $\bids$,
the block producer surplus of a BP that chooses the block $B \in \blocks$ is
\begin{equation}\label{eq:bputil}
u_{BP}(B) := \vbp(B) + \sum_{t \in B} \pricet(B,\bids)
 - \burn(B,\bids).
\end{equation}
\end{definition}
In~\eqref{eq:bputil}, we highlight the dependence of the BP's utility
function on the argument that is under its direct control, its
choice of a block.  The BP's utility depends on the payment and burning
rules of the TFM, but not on its allocation rule (which the BP is free
to ignore, if desired).

Finally, the collective utility function of (passive) currency holders
for a block~$B$ with transaction bids~$\bids$ is~$\burn(B,\bids)$, the
amount of currency burned by the BP.  (As promised, for a block~$B$,
no matter what the bids and the TFM, the sum of the utilities of
users, the BP, and holders is exactly the welfare defined
in~\eqref{eq:welfare}.)

\subsection{Incentive-Compatible TFMs}\label{ss:ic}

In this paper, we focus on two incentive-compatibility notions for
TFMs---which, as we'll see, are already largely incompatible---one
for users and one for block producers.  We begin with the latter.

\vspace{-.1in}

\paragraph{BPIC TFMs.}
We assume that a BP will choose a block to maximize its utility
function, the BPS (Definition~\ref{d:bputil}).  The defining
equation~\eqref{eq:bputil} shows that, once the payment and burning
rules of a TFM are fixed, a BP's valuation and blockset dictate the
unique (up to tie-breaking) BPS-maximizing block for each bid vector.
We call an
allocation rule {\em consonant} if, given the payment and burning
rules, it instructs a BP to always choose such a block (breaking ties
in an arbitrary but consistent fashion).
\begin{definition}[Consonant Allocation Rule]\label{d:consonant}
An allocation rule $\allocs$ is {\em consonant} with the payment and
burning rules~$\prices$ and~$\burn$ if:
\begin{itemize}

\item [(a)]  for every BP valuation~$\vbp$
and blockset~$\blocks$, and for every choice of transaction
bids~$\bids$,
\[
\underbrace{\allocs(\bids,\vbp,\blocks)}_{\text{recommended block}}
\in \,\,\, \underbrace{\argmax_{B \in \blocks}\left\{ \vbp(B) + \sum_{t \in B}
    \pricet(B,\bids)
 - \burn(B,\bids) \right\}}_{\text{BPS-maximizing block}};
\]
\item [(b)] for some fixed total ordering on the blocks of~$\blocks$,
  the rule breaks ties between BPS-maximizing blocks according to this
  ordering.

\end{itemize}
\end{definition}

{\em BPIC} TFMs are then precisely those that always instruct a BP to
choose a BPS-maximizing block.
\begin{definition}[Incentive-Compatibility for Block Producers (BPIC)]\label{d:bpic}
A TFM $\tfm$ is {\em incentive-compatible for block producers (BPIC)}
if its allocation rule~$\allocs$ is consonant with its payment
rule~$\prices$ and burning rule~$\burn$.
\end{definition}

\vspace{-.1in}

\paragraph{DSIC TFMs.}
Dominant-strategy incentive-compatibility (DSIC) is one way to
formalize the idea of ``good user experience (UX)'' for TFMs.  The
condition asserts that every user has an ``obviously optimal'' bid,
meaning a bid that, provided the BP follows the TFM's allocation rule,
is guaranteed to maximize the user's utility (no matter what other
users might be bidding).  In the next definition, by a {\em bidding
  strategy}, we mean a function$~\sigma$ that maps a valuation to a
recommended bid for a user with that valuation.
\begin{definition}[Dominant-Strategy Incentive-Compatibility
  (DSIC)]\label{d:dsic}
A TFM $\tfm$ is {\em dominant-strategy incentive-compatible (DSIC)} 
if there is a bidding strategy~$\sigma$ such that, for every BP
valuation~$\vbp$ and blockset~$\blocks$, every user~$i$ with
transaction~$t$, every valuation~$\valt$ for~$i$, and every choice of
other users' bids~$\bidsmt$,
\begin{equation}\label{eq:dsic}
\underbrace{\sigma(\valt)}_{\text{recommended bid}} \in
\underbrace{\argmax_{\bidt} \{  u_t(\bidt)
  \}}_{\text{utility-maximizing bid}},
\end{equation}
where~$u_t$ is defined as in~\eqref{eq:userutil}.  
\end{definition}
That is, bidding according to the recommendation of the bidding
strategy~$\sigma$ is guaranteed to maximize a user's
utility.\footnote{The term ``DSIC'' is often used to refer
  specifically to mechanisms that satisfy the condition in
  Definition~\ref{d:dsic} with the truthful bidding strategy,
  $\sigma(\valt) = \valt$.  Any mechanism that is DSIC in the sense of
  Definition~\ref{d:dsic} can be transformed into one in which
  truthful bidding is a dominant strategy, simply by enclosing the
  mechanism in an outer wrapper that accepts truthful bids, applies
  the assumed bidding strategy~$\sigma$ to each, and passes on the
  results to the given DSIC mechanism.  (This trick is known as the
  ``Revelation Principle''; see
  e.g.~\cite{20lectures}.)\label{foot:rp}}
This is a strong property: a bidding strategy can depend only on what
a user knows (i.e., its private valuation), while the right-hand side
of~\eqref{eq:dsic} implicitly depends (through~\eqref{eq:userutil})
also on the bids of the other users and the BP's preferences.

\begin{example}[First-Price Auctions Revisited]\label{ex:fpa2}
The optimal bid for a user in an FPA (Example~\ref{ex:fpa}) generally
depends on other users' bids, so FPAs are not DSIC in the sense of
Definition~\ref{d:dsic}. Are FPAs BPIC? The answer depends on how the
allocation rule $\allocs$ of an FPA is defined. The usual
definition of an FPA~\cite{roughgarden2021transaction} asserts that a
BP should choose the block
that maximizes the BP's revenue from transaction fees---the block that
maximizes the sum of the bids of the included transactions.  This
allocation rule is consonant with the FPA's payment and burning rules
if the BP is passive, but not if the BP is active.  With an active BP,
the consonant allocation rule would instruct a BP to maximize its
BPS---the sum of the bids of the included transactions {\em plus} any
private value that the BP has for the block.  If an FPA's allocation
rule is redefined in this way to ensure consonance, then the FPA becomes
BPIC.
\end{example}

\begin{example}[EIP-1559 Revisited]\label{ex:15592}
Following~\cite{roughgarden2021transaction}, in the EIP-1559 mechanism
(Example~\ref{ex:1559}), call the base fee $\basefee$ {\em excessively
  low} if the BP cannot fit all the transactions~$t$ satisfying $\bidt
\ge \basefee \cdot s_t$ into a single (feasible) block. (Recall
that~$s_t$ denotes the publicly known ``size'' of a transaction~$t$.)
When the base fee is not excessively low,
the standard allocation rule for the EIP-1559 mechanism
instructs the BP to include all transactions~$t$
for which $\bidt \ge \basefee \cdot s_t$ (and to leave out
any transactions~$t$ with $\bidt < \basefee \cdot s_t$).
With a passive BP, this allocation rule is consonant with
the payment and burning rules of the mechanism: In this case, including a
transaction~$t$ in the block contributes precisely $\bidt - \basefee
\cdot s_t$ to the BPS, so a passive BP is motivated to
include all and only the
transactions for which this expression is nonnegative.  With this
allocation rule (and a base fee that is not excessively low), the TFM
is also DSIC, with the bidding
strategy~$\sigma$ defined by $\sigma(\valt) = \min \{ \valt, \basefee
\cdot s_t \}$.

With an active BP, however, the usual allocation rule above is no
longer consonant with the payment and burning rules of the mechanism,
even when the base fee is not excessively low:
A BP might be motivated to include a transaction~$t$ with $\bidt <
\basefee \cdot s_t$, if the deficit can be compensated for with the
BP's own private value for including the transaction.  Thus, this
version of the EIP-1559 mechanism is not BPIC. The mechanism's allocation
rule can be redefined to restore consonance, by instructing the BP to
choose the block that maximizes its BPS (rather than its revenue), but
this robs the mechanism of its DSIC property: Intuitively, without
knowing the BP's valuation, a user cannot know whether to underbid
(below its reserve price) to take advantage of a BP 
that might be willing to subsidize the difference.
\end{example}
The main result of this paper (Theorem~\ref{t:main}) shows that the
whack-a-mole between the DSIC and BPIC properties in
Example~\ref{ex:15592} is not particular to the EIP-1559 mechanism:
When BPs are active, {\em no} TFM that charges non-zero user fees can
be both DSIC and BPIC.

Our final example shows that, with a passive BP, the
DSIC and BPIC properties can be achieved simultaneously even without
the assumption in Example~\ref{ex:15592} about the accuracy of a base
fee.
\begin{example}[Tipless Mechanism]\label{ex:tipless}
The {\em tipless mechanism}~\cite{roughgarden2021transaction} is a variation on the
EIP-1559 mechanism that removes the user-specified ``tips.''
Formally, the burning rule is the same as in Example~\ref{ex:1559}
(i.e., $\burn(B,\bids) = \sum_{t \in B} \basefee \cdot s_t$), while
the payment rule changes from $\pricet(B,\bids) = \bidt$ 
to $\pricet(B,\bids) = \min\{\bidt, \basefee \cdot
s_t\}$ for $t \in B$.  The mechanism's allocation rule instructs the
BP to include
only transactions~$t$ satisfying $\bidt \ge \basefee \cdot s_t$ and,
subject to this constraint and block feasibility, to maximize the
total size of the included transactions.  (Ties are broken according
to some fixed ordering over feasible blocks.)
The contribution of an included transaction to a BP's revenue is
either~0 (if $\bidt \ge \basefee \cdot s_t$) or negative (otherwise).
This implies that a passive BP cannot improve its BPS by deviating
from the allocation rule's recommendation.  
This TFM is also DSIC, under the same bidding strategy used in
Example~\ref{ex:15592} or, alternatively, under the truthful bidding
strategy.
\end{example}

\vspace{-.1in}

\paragraph{Off-chain agreements.}
For completeness, we briefly mention a third incentive-compatibility
notion, which concerns cartels that include a BP and one or more
users.  Such cartels can in some cases coordinate off-chain to
manipulate the intended behavior of a TFM.  For example, one of the
primary reasons that the EIP-1559 mechanism burns its base fee
revenue is resilience to coordination of this type. (If that
revenue was instead passed on to the BP, low-value users could
collude with the BP to evade the base fee, by overbidding on-chain to
clear the base fee while accepting a rebate from the BP off-chain.)
Informally, a TFM is {\em OCA-proof} if it is robust to collusion of
this type. (``OCA'' stands for ``off-chain agreement'';
see~\cite{roughgarden2021transaction} for the precise definition.)
OCA-proofness shaped the design of the EIP-1559 mechanism, and related
notions are fundamental to the TFM impossibility results (with passive
BPs) in~\cite{chung2023foundations}.\footnote{For example, one way to
  interpret the difference between the EIP-1559 mechanism
  (Example~\ref{ex:15592}) and the tipless mechanism
  (Example~\ref{ex:tipless}) is that, when the base fee is excessively
  low, the former mechanism gives up on DSIC (but retains OCA-proofness)
  while the latter gives up on OCA-proofness (but remains DSIC).}
OCA-proofness plays almost no role in this paper, because our
impossibility results (Theorems~\ref{t:main} and~\ref{t:welfare})
apply already to mechanisms that are merely DSIC and BPIC (and not
necessarily OCA-proof).

\begin{remark}[OCAs and the Two Versions of the EIP-1559 and Tipless
  Mechanisms]\label{rem:constrained}
In the versions of the EIP-1559 and tipless mechanisms described in
Examples~\ref{ex:15592} and~\ref{ex:tipless}, a BP is free to include
in a block any transaction it wants, whether or not the
bid~$\bidt$ submitted with the transaction is high enough to cover the
required burn $\basefee \cdot s_t$.  An alternative design would
change the definition of block feasibility so that such transactions
are ineligible for inclusion.\footnote{This would require a slight modification
  to our formalism in Section~\ref{ss:utilities}, with the feasibility
  of a block now depending on the bids attached to transactions,
  rather than solely on the transactions themselves.}
There is effectively no difference between the two
designs when BPs are passive: A rational such BP would never include a
transaction with $\bidt < \basefee \cdot s_t$, even were it free to do
so.  An active BP, however, will be motivated to include such a
transaction if it has a sufficiently high private value for it.

Off-chain agreements render these second versions of the EIP-1559 and
tipless mechanisms equivalent to those described in
Examples~\ref{ex:15592} and~\ref{ex:tipless}, even with active BPs.
The reason is similar to the reason why base fee revenue must be
withheld from a BP: If users collude with a BP, they can always bid
$\basefee \cdot s_t$ on-chain to ensure inclusion eligibility while
accepting an off-chain rebate of $\basefee \cdot s_t - \bidt$ from the
BP.
\end{remark}

\section{An Impossibility Result for DSIC and BPIC Mechanisms}

\subsection{Can DSIC and BPIC Be Achieved Simultaneously?}

The DSIC property (Definition~\ref{d:dsic}) encodes the idea of a
transaction fee mechanism with ``good UX,'' meaning that bidding is
straightforward for users.  Given the unilateral power of BPs in
typical blockchain protocols, the BPIC property
(Definition~\ref{d:bpic}) would seem necessary, absent any additional
assumptions, to have any faith that a TFM will be carried out by BPs
as intended.  One can imagine a long wish list of properties that we'd
like a TFM to satisfy; can we at least achieve these two?

The tipless mechanism (Example~\ref{ex:tipless}) is an example of a
TFM that is DSIC and BPIC in the special case of passive BPs. This TFM
is also ``non-trivial,'' in the sense that users generally pay for the
privilege of transaction inclusion.  With active BPs, meanwhile, the
DSIC and BPIC properties can technically be achieved simultaneously by
the following ``trivial'' TFM: the payment rule~$\prices$ and burning
rule~$\burn$ are identically zero, and the allocation rule $\allocs$
instructs the BP to choose the feasible block that maximizes its
private value (breaking ties in a bid-independent way).  This TFM is
BPIC by construction, and it is DSIC because a user has no control
over whether it is included in the chosen block (it's either in the
BP's favorite block or it's not) or its payment (which is always~0).

Thus, the refined version of the key question is:
\begin{quote}
Does there exist a non-trivial TFM that is DSIC and BPIC with
  active BPs?
\end{quote}

\subsection{Only Trivial Mechanisms Can Be DSIC and BPIC}

The main result of this paper is a negative answer to the preceding
question.  By the {\em range} of a bidding strategy~$\sigma$, we mean
the set of bid vectors realized by nonnegative valuations: $\{
\sigma(\vals) \,:\, \vals \ge 0 \}$, where~$\sigma(\vals)$ denotes the
componentwise application of~$\sigma$.
\begin{theorem}[Main Impossibility Result]\label{t:main}
If the TFM $\tfm$ is DSIC with bidding strategy~$\sigma$ and BPIC with
active block producers, then 
the payment rule $\prices$ is identically
zero on the range of~$\sigma$.
\end{theorem}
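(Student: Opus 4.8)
The plan is to argue by contradiction and to exploit the fact that the DSIC condition (Definition~\ref{d:dsic}) must survive against \emph{every} BP valuation $\vbp$ and blockset $\blocks$, including adversarial ones. Suppose the conclusion fails, so that there is a valuation profile $\vals \ge 0$, a feasible block $B^*$, and a transaction $t \in B^*$ with $\pricet(B^*,\bids) = p > 0$, where I write $\bids := \sigma(\vals)$ and let $\valt$ be the valuation of the originator of $t$ (so $\bidt = \sigma(\valt)$). I will exhibit one BP instance for which this positive payment makes the recommended bid $\sigma(\valt)$ strictly inferior to underbidding, contradicting~\eqref{eq:dsic}.

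The construction is to make the BP \emph{active} and so enamored of $B^*$ that it publishes $B^*$ no matter what $t$ bids. Concretely, I take $\blocks$ to be the (finite) set of all feasible blocks over the transactions at hand and set $\vbp(B^*) := M$ and $\vbp(B') := 0$ for every other $B' \in \blocks$. The only way the recommended block could omit $t$ is for the BP to prefer some $t$-omitting competitor; but since payments are capped by bids (individual rationality) and the burns $\burn(B',\cdot)$ take only finitely many values over a fixed bid vector, the block-producer surplus~\eqref{eq:bputil} of every competitor is bounded. Hence for $M$ large enough the unique BPS-maximizer is $B^*$, \emph{simultaneously} at the bid vector $\bids$ and at the deviated vector $(0,\bidsmt)$. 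By consonance (Definitions~\ref{d:consonant} and~\ref{d:bpic}), the allocation rule must then recommend $B^*$ in both cases, so $t$ is included whether it bids $\sigma(\valt)$ or $0$.

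With $B^*$ pinned down, I can read off the two utilities from~\eqref{eq:userutil}. Bidding $\sigma(\valt)$ gives $\valt - p$, since $t$ is included and pays $p$. Bidding $0$ gives $\valt$: the transaction is still included, while individual rationality and nonnegativity force $0 \le \pricet(B^*,(0,\bidsmt)) \le 0$, a zero payment. Thus $u_t(0) = \valt > \valt - p = u_t(\sigma(\valt))$, so $\sigma(\valt) \notin \argmax_{\bidt} u_t(\bidt)$, contradicting DSIC. As $\vals$, $B^*$, and $t \in B^*$ were arbitrary, this shows $\prices$ is identically zero on the range of~$\sigma$.

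The step I expect to be most delicate is the design of the BP instance, and specifically the commitment to a genuinely \emph{active} valuation over a rich blockset rather than a convenient shortcut such as the singleton blockset $\{B^*\}$. A singleton would pin down $B^*$ for free, but the very same argument would then (wrongly) refute passive mechanisms like the tipless mechanism of Example~\ref{ex:tipless}, which is DSIC; the point is that only the BP's large private value for $B^*$ can override the fee-and-burn incentives that would otherwise let a low bid exclude $t$, and this is exactly where ``active'' does the work. The remaining care is purely quantitative: choosing a single $M$ that forces $B^*$ at both $\bids$ and $(0,\bidsmt)$, which is possible because the blockset is finite and payments are bounded by bids.
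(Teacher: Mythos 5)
Your argument is correct, but it is not the route the paper takes for the main proof; it is essentially the variation the paper records in Remark~\ref{rem:main}(a). The paper's proof keeps the adversarially constructed valuation \emph{additive}: it boosts the stand-alone value of every transaction in the originally recommended block $B$ by more than $P+Q$, shows that every block failing to contain all of $B$ is strictly BPS-dominated, and then needs a final, somewhat delicate step---relying on the consistent tie-breaking in Definition~\ref{d:consonant}(b)---to argue that the block actually chosen under the modified valuation with the \emph{original} bids is still $B$ itself, so that the contradiction lands on $\price_{t^*}(B,\bids)$. Your single-minded valuation (value $M$ on $B^*$, zero elsewhere) makes $B^*$ the \emph{unique} BPS-maximizer at both the original and the deviated bid vector, which lets you skip the tie-breaking machinery and the Revelation Principle entirely and read off the contradiction with DSIC directly; the paper explicitly notes that this one-block variant dispenses with the tie-breaking assumption. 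What you give up is the strengthening the paper advertises immediately after the theorem statement: that the impossibility holds even when BPs are restricted to nonnegative \emph{additive} valuations (your valuation is not additive, since it assigns $M$ to $B^*$ but $0$ to its sub-blocks). Two small points: your appeal to finiteness of the blockset is unnecessary---nonnegativity of burns and individual rationality of payments already cap every competitor's BPS at $P$, and only the two burn values $\burn(B^*,\bids)$ and $\burn(B^*,(0,\bidsmt))$ need to be dominated by $M$; and your conclusion is in fact slightly stronger than the paper's (zero payment at \emph{every} feasible block in the blockset, not just at recommended blocks), which is harmless since it implies the stated theorem.
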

The proof of Theorem~\ref{t:main} will show that the result holds even
if 
BPs are restricted to have
nonnegative additive valuations and all known transactions can be included
simultaneously into a single feasible block.

\vspace{-.1in}

\paragraph{Discussion.}
The role of an impossibility result like Theorem~\ref{t:main} is to
illuminate the most promising paths forward.  From it, we learn that
our options are (i) constrained; and (ii) already being actively explored
by the blockchain research community.  Specifically, with active BPs,
to design a non-trivial TFM, we must choose from among three options:
\begin{enumerate}

\item Give up on ``good UX,'' at least as it is expressed by the DSIC
  property.  Arguably, this is the status quo, at least for blockchain
  protocols in which BPs are sufficiently motivated to be active.

\item Give up on the BPIC property, presumably compensating with
  restrictions on block producer behavior (perhaps enforced
  using, e.g., trusted hardware~\cite{flashbots2} or cryptographic
  techniques~\cite{encrypted_mempools}).

\item Expand the TFM design space, for example by incorporating
  order flow auctions (e.g., \cite{flashbots1}) or block producer
  competition (e.g., \cite{dom_mev_burn}) to expose information about
  a BP's private valuation to a TFM. (See Section~\ref{s:conc} for
  further discussion.)

\end{enumerate}

\vspace{.1in}

\begin{prevproof}{Theorem}{t:main}
Let $\tfm$ be a TFM that is BPIC, and DSIC with the bidding
strategy~$\sigma$.  By the Revelation Principle (see
footnote~\ref{foot:rp}), we can assume that~$\sigma$ is the truthful
bidding strategy (i.e., the identity function).
Toward a contradiction, suppose there is a nonnegative additive BP
valuation $\vbp$, a BP 
blockset $\blocks$, a set of transactions with bids $\bids$, and a
transaction~$t^*$ such that $\price_{t^*}(B,\bids) > 0$, where~$B =
\allocs(\bids,\vbp,\blocks)$.  Because the pricing rule~$\prices$
is individually rational (see Section~\ref{ss:tfm}), we must have $t^* \in
B$.  Because the TFM $\tfm$ is BPIC, the block~$B$ must maximize the BP's BPS
over all blocks in its blockset $\blocks$. 

We next define a modified BP valuation and a modified bid vector.
First, let $\bids'=(0,\bids_{-t^*})$ denote the bid vector in which 
the original bid $\bid_{t^*}$ for transaction~$t^*$
is dropped to~0 and all other bids are held fixed.
Second, let~$P$ denote the sum of the bids of all known transactions
(i.e., $P = \sum_t \bidt$) and $Q$ the burn that the TFM
would require on the modified bid vector for the block~$B$ (i.e., 
$Q = \burn(B,\bids')$), and define a modified
(but still additive) valuation $\vbphat$ so that
$\vbphat(\{t\}) > \vbp(\{t\}) + P + Q$ for all $t \in B$ and
$\vbphat(\{t\}) = 0$ for all $t \notin B$.  

The key claim is that the BPS-maximizing block
$\allocs(\bids',\vbphat,\blockset)$ for the modified valuation with
the modified bid vector contains every transaction of~$B$, and in
particular~$t^*$.  Under this modified valuation and bid vector, the
BPS of a block~$B' \in \blocks$ can be written as
\begin{equation}\label{eq:main0}
\vbphat(B') + \sum_{t \in B'} \pricet(B',\bids')
 - \burn(B',\bids').
\end{equation}
By the definition of~$\vbphat$, any transaction in~$B$ omitted
from~$B'$ results in a loss of more than $P+Q$ in the private valuation of the
BP:
\begin{equation}\label{eq:main1}
\vbphat(B) > \vbphat(B') + P + Q
\end{equation}
for every feasible block $B' \not\supseteq B$.
Next, individual rationality of the payment rule~$\prices$ implies
that the 
maximum revenue a BP can receive from including a transaction $t$ is
the attached bid $\bidt$, and thus the maximum revenue it receives
from any block in $\blocks$ is at most $P$.  Because the
payment rule~$\prices$ is nonnegative, we have
\begin{equation}\label{eq:main2}
\sum_{t \in B'} \pricet(B',\bids') \le
\sum_{t \in B} \pricet(B,\bids') + P
\end{equation}
for every~$B' \in \blocks$.
Finally, because the burning rule~$\burn$ is nonnegative,
\begin{equation}\label{eq:main3}
\burn(B,\bids') \le
\burn(B',\bids') + Q
\end{equation}
for every~$B' \in \blocks$.  Combining the
inequalities~\eqref{eq:main1}--\eqref{eq:main3} with~\eqref{eq:main0}
then implies that, with the modified valuation and bid vector, the BPS
of the block~$B$ is strictly higher than that of every block that
omits at least one of $B$'s transactions:
\[
\underbrace{\overbrace{\vbphat(B)}^{> \vbphat(B')+P+Q} + \overbrace{\sum_{t \in B}
    \pricet(B,\bids')}^{\ge 0}
 - \overbrace{\burn(B,\bids')}^{=Q}}_{\text{BPS of~$B$}}
>
\underbrace{\vbphat(B') + \overbrace{\sum_{t \in B'}
    \pricet(B',\bids')}^{\le P}
 - \overbrace{\burn(B',\bids')}^{\ge 0}}_{\text{BPS of~$B'$}}
\]
for every $B' \not\supseteq B$.  This completes the proof of the key
claim.

The point of this claim is that, when the BP has valuation $\vbphat$
and blockset $\blockset$ and the other transactions' bids are
$\bids_{-t^*}$, the transaction~$t^*$ will be included in the BP's
chosen block $B'=\allocs(\bids',\vbphat,\blocks)$ even when its
creator sets $\bid_{t^*}=0$.  Because the payment rule~$\prices$ is
individually rational, $\price_{t^*}(\bids',B')=0$.  Because the user
that created transaction~$t^*$ can guarantee inclusion at price~0 with
a bid of~0, any bid that leads to a positive price is automatically
suboptimal for it. Because the TFM $\tfm$ is DSIC with the truthful
bidding strategy, $t^*$ must be included at a price of~0 also when its
creator submits the original bid $\bid_{t^*}$; that is, if~$\hat{B}$
denotes $\allocs(\bids,\vbphat,\blocks)$, then $t^* \in \hat{B}$ and
$\price_{t^*}(\hat{B},\bids) = 0$.

We can complete the proof by arguing that $\hat{B} = B$.  (This would
imply that $\price_{t^*}(B,\bids) = 0$, in direct contradiction of our
initial assumption.)  By definition, the block~$B$ is a BPS-maximizing
block for a BP with valuation $\vbp$ and blockset $\blocks$ with
transaction bids~$\bids$, and it is the first such block with respect
to some fixed ordering over $\blocks$ (recall
Definition~\ref{d:consonant}(b)).  By construction of the modified
valuation $\vbphat$, passing from $\vbp$ to $\vbphat$ increases the
private value of the block~$B$ 
at least as much as any other block of $\blocks$.
Because the payment and burning rules of a TFM are independent of the
BP valuation, holding the bids~$\bids$ fixed, the block~$B$ also
enjoys at least as large a BPS increase as any other block of
$\blocks$.  Thus, the BPS-maximizing blocks with respect to the
modified valuation $\vbphat$ are a subset of those with respect to the
original valuation~$\vbp$, and this subset includes the block~$B$.
Because the allocation rule breaks ties consistently,
$\hat{B} = \allocs(\bids,\vbphat,\blocks)$ must be the original
block~$B$.
\end{prevproof}

\begin{remark}[Variations of Theorem~\ref{t:main}]\label{rem:main}
Variations on the proof of Theorem~\ref{t:main} above show that the
same conclusion holds for:
\begin{itemize}

\item [(a)] BPs that have a non-zero private value for only one block
  (a very  special case of single-minded valuations).  This
version of the argument does not require the consistent tie-breaking
assumption in Definition~\ref{d:consonant}(b).

\item [(b)] Burning rules that need not be nonnegative (i.e.,
  rules that can print money), provided
  that, for every bid vector $\bids$, there is a finite lower bound on
  the minimum-possible burn $\min_{B \in \blocks} \burn(B,\bids)$.  (This
  would be the case if, for
  example, the blockset~$\blocks$ is finite.)

\item [(c)] Bid spaces and payment rules that need not be
  nonnegative (i.e., with negative bids and user rebates allowed,
  subject to individual rationality),
  provided there is a finite minimum bid~$\bid_{min} \in (-\infty,0]$
  and that $\pricet(B,\bids) = \bid_{min}$ whenever $t \in B$ with
  $\bidt = \bid_{min}$.  In this case, the argument shows that the
  payment rule $\prices$ must be
  identically equal to $\bid_{min}$ on the range of~$\sigma$.

\end{itemize}
\end{remark}

\subsection{Marginal Values and Approximate DSIC Guarantees}

Non-trivial TFMs satisfying the DSIC and BPIC conditions exist with
passive BPs (Example~\ref{ex:tipless}) but not with active BPs
(Theorem~\ref{t:main}). This section aims for a more quantitative
understanding of this difference, and its main result shows a sense in
which the ``amount by which DSIC fails'' for a user scales with the
amount of value that a BP can extract from that user's transaction.
For this result, we assume that a BP's blockset is {\em
  downward-closed}, meaning that removing a transaction from a
feasible block yields another feasible block (e.g., if feasibility is
determined by an upper bound on the total size of the included
transactions).  

In general, a BP's value for a transaction depends on
the other transactions in the block. The next definition quantifies
the best-case marginal value of a transaction to a BP.
\begin{definition}[Maximum Marginal Value of a Transaction]\label{d:marginal}
With respect to a BP valuation~$\vbp$ and a downward-closed
blockset~$\blocks$, the {\em maximum marginal value}~$\nu_t$ of a transaction~$t$ is 
\[
\nu_t := \max_{B \in \blocks \,:\, t \in B} \left[ \vbp(B) - \vbp(B \sm
\{t\}) \right],
\] 
where ``$B \sm \{t\}$'' denotes the block~$B$ with the transaction~$t$
removed.
\end{definition}
For a passive BP, the maximum marginal value of a transaction is always~0.  If
a BP has an additive valuation (defined by transaction-specific
constants~$\mu_t = \vbp(\{t\})$),
the maximum marginal value~$\nu_t$ of a transaction is the same as the
BP's stand-alone value $\mu_t$ for it.

We posit that, at least in some scenarios, even though a BP's
valuation is private, a user may know a reasonable upper bound on the
maximum marginal value of its transaction.  For example, a direct
payment from one user to another might reasonably be assumed to always
have~0 marginal value to the BP.  For an AMM trade, it may be possible
to estimate the value that could be extracted by the BP from front-
and/or backrunning the transaction.

We next show that, in a BPIC version of the tipless mechanism
(Example~\ref{ex:tipless}), an upper bound on the maximum marginal
value of a user's transaction~$t$ is also an upper bound on both the
amount by which the user might want to deviate from its recommended
bid and the additional utility the user stands to gain from such a
deviation.\footnote{The same result holds for the EIP-1559 mechanism
  (see Example~\ref{ex:1559}) when the base fee is not excessively low
  (see Example~\ref{ex:15592}).}  Precisely, the {\em BPIC tipless
  mechanism} uses the same payment and burning rules as in
Example~\ref{ex:tipless} and redefines the allocation rule to restore
consonance (instructing the BP to select a BPS-maximizing block,
breaking ties according to some fixed ordering over blocks).  
\begin{theorem}[The BPIC Tipless Mechanism Is Approximately DSIC]\label{t:dsic_approx}
In the BPIC tipless mechanism with base fee $\basefee$, for the
bidding strategy~$\sigma(\valt) = \min \{ \valt, \basefee
\cdot s_t \}$, and assuming that the BP has a downward-closed blockset and
carries out the suggested allocation rule:
\begin{itemize}

\item [(a)] for a user with transaction~$t$ and valuation~$\valt$, 
bidding more than $\sigma(\valt)$ is dominated by bidding
$\sigma(\valt)$; 

\item [(b)] for a user with transaction~$t$, valuation~$\valt$, and
  maximum marginal value~$\nu_t$, bidding less than $\sigma(\valt) - \nu_t$ is
  dominated by bidding $\sigma(\valt)$.

\end{itemize}
\end{theorem}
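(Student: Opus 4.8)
The plan is to reduce everything to the structure of the tipless BPS. First I would rewrite the block producer surplus of a block $B$ under the tipless payment and burning rules as
\[
u_{BP}(B) = \vbp(B) + \sum_{t\in B}\big(\min\{\bidt,\basefee\cdot s_t\}-\basefee\cdot s_t\big)=\vbp(B)-\sum_{t\in B}\max\{\basefee\cdot s_t-\bidt,0\},
\]
so that each transaction $t$ contributes a nonpositive net fee $\min\{\bidt,\basefee\cdot s_t\}-\basefee\cdot s_t$ to the BP. The key structural observation is that both this contribution and the user's payment $\min\{\bidt,\basefee\cdot s_t\}$ depend on $\bidt$ only through the capped value $\min\{\bidt,\basefee\cdot s_t\}$ (the per-block burn $\sum_{t\in B}\basefee\cdot s_t$ being bid-independent). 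Hence the BP's BPS-maximization over $\blocks$---and therefore both the induced inclusion of $t$ and the price it pays---are functions of this capped bid alone. I would fix an arbitrary BP valuation $\vbp$, blockset $\blocks$, and competing bids, and argue weak dominance of $\sigma(\valt)$ in each such environment.

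For part (a), I would split on whether $\valt\ge\basefee\cdot s_t$. If $\valt\ge\basefee\cdot s_t$, then $\sigma(\valt)=\basefee\cdot s_t$ and any bid $\bidt>\sigma(\valt)$ has the same capped value $\basefee\cdot s_t$; by the structural observation the entire BPS landscape is unchanged, so (with the same fixed tie-breaking) the chosen block and $t$'s price are unchanged and the utilities are equal. If $\valt<\basefee\cdot s_t$, then $\sigma(\valt)=\valt$, and bidding $\sigma(\valt)$ yields utility exactly $0$ (price $\valt$ when included, $0$ otherwise); any $\bidt>\valt$ has capped value in $(\valt,\basefee\cdot s_t]$, so whenever $t$ is included the user pays more than $\valt$ and nets negative utility, and otherwise nets $0$. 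In both cases $\sigma(\valt)$ weakly dominates.

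For part (b) the crux is an exclusion lemma: if $\bidt<\basefee\cdot s_t-\nu_t$ then $t$ lies in no BPS-maximizing block. Assuming $t$ were in an optimal block $B'$, downward-closure makes $B'\sm\{t\}$ feasible, and
\[
u_{BP}(B')-u_{BP}(B'\sm\{t\})=\big(\vbp(B')-\vbp(B'\sm\{t\})\big)+\big(\bidt-\basefee\cdot s_t\big)\le \nu_t+\bidt-\basefee\cdot s_t<0,
\]
where I bound the block-specific marginal private value by the worst-case $\nu_t$ (Definition~\ref{d:marginal}) and use $\bidt<\basefee\cdot s_t$; this contradicts optimality, so $t$ is excluded and the user's utility is $0$ regardless of tie-breaking. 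Since $\sigma(\valt)-\nu_t=\min\{\valt,\basefee\cdot s_t\}-\nu_t\le\basefee\cdot s_t-\nu_t$, any $\bidt<\sigma(\valt)-\nu_t$ triggers the lemma and yields utility $0$, whereas bidding $\sigma(\valt)$ yields $\max\{\valt-\basefee\cdot s_t,0\}\ge0$ when included and $0$ otherwise. Thus $\sigma(\valt)$ weakly dominates.

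I expect the exclusion lemma to be the main obstacle: part (a) and the reduction to capped bids are essentially bookkeeping, but part (b) is where the downward-closed hypothesis is genuinely needed (to legally delete $t$ from the optimal block) and where one must correctly offset the \emph{worst-case} marginal private value $\nu_t$ against the fee deficit $\basefee\cdot s_t-\bidt$ induced by underbidding. A minor point to handle cleanly is the case $\nu_t\ge 0$ (so that the capped value simplifies and the threshold $\sigma(\valt)-\nu_t$ is meaningful), together with the observation that the exclusion argument rules out \emph{every} optimal block containing $t$, making the conclusion robust to the fixed tie-breaking order.
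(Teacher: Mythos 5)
Your proposal is correct and follows essentially the same route as the paper's proof: part (a) via the case split on whether $\valt \ge \basefee \cdot s_t$ together with the observation that the outcome depends on $\bidt$ only through $\min\{\bidt, \basefee\cdot s_t\}$, and part (b) via the exclusion argument that combines downward-closure with the bound $\vbp(B')-\vbp(B'\sm\{t\})\le\nu_t$ against the fee deficit $\basefee\cdot s_t-\bidt$. Your explicit ``exclusion lemma'' and the displayed BPS computation are just a more formal packaging of the paper's verbal argument, so no substantive differences to report.
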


\paragraph{Discussion.}
Theorem~\ref{t:dsic_approx} shows that the range of undominated
strategies for a user, and therefore also the maximum additional
utility it could earn from deviating from its recommended bid, is no
larger than the maximum marginal value~$\nu_t$ of its
transaction~$t$. (At best, bidding~$y$ less than the recommended
bid~$\sigma(\valt)$ in the tipless mechanism decreases a user's
payment by~$y$ without affecting its transaction's inclusion.)  This
is good news on two counts.  First, if marginal values are always
small, the ``obvious'' bidding strategy is near-optimal.  Second, this
guarantee holds user-by-user; for example, a user that knows that its
own transaction has zero marginal value (e.g., a direct payment) can
follow the recommended bidding strategy without any second-guessing,
even without knowing anything about the other transactions and their
maximum marginal values.

\vspace{.1in}

\begin{prevproof}{Theorem}{t:dsic_approx}
Fix a base fee~$r$, a BP valuation and blockset, a transaction~$t$, a
valuation~$\valt$, and bids $\bidsmt$ for all transactions other
than~$t$. Let~$\nu_t$ denote the maximum marginal value of~$t$.

To prove part~(a), we consider two cases.  First, suppose that
$\valt \ge r \cdot s_t$, in which case the recommended bid is
$\sigma(\valt) = \basefee \cdot s_t$.  
Every overbid $\bidt > \basefee \cdot s_t$
leads to the same outcome as the bid~$\basefee \cdot s_t$. For
example, the fee paid and the required burn for the transaction (if
included) both remain equal to $r \cdot s_t$. Further, the set of
BPS-maximizing blocks is identical with the bids $\bidt$ and~$r \cdot
s_t$, so the transaction~$t$ is included in the BP's chosen
block either in both cases or in neither case. (Recall from
Example~\ref{ex:tipless} that the
tipless mechanism breaks ties according to
some fixed ordering over feasible blocks.)  In the
second case, with $\valt < \basefee \cdot s_t$, the recommended bid
is~$\sigma(\valt) = \valt$.  If the creator of~$t$ bids
$\bidt > \valt$ and~$t$ is excluded from the BP's block, then~$t$ would
also be excluded from the BP's block with a bid of $\valt$.  If the
BP does include~$t$ with a bid of $\bidt$, then its creator must pay
$\min\{ \bidt,r\cdot s_t\} > \valt$, resulting in negative
utility. (Whereas the recommend bid~$\sigma(\valt)=\valt$ would 
guarantee nonnegative utility.)

For part~(b), we claim that a bid lower than $\sigma(\valt)-\nu_t$
guarantees that the BP will exclude the transaction~$t$ from its
chosen block, resulting in zero user utility.
Because~$\sigma(\valt) \le \basefee \cdot s_t$ and the burn required
by the BP for including the transaction is $\basefee \cdot s_t$, a bid
lower than $\sigma(\valt)-\nu_t$ forces the BP to pay a subsidy larger
than $\nu_t$ to include~$t$ in its block.  Because the BP enjoys at
most~$\nu_t$ additional private value from~$t$'s inclusion (by the
definition of the maximum marginal value~$\nu_t$) and its blockset is
downward-closed (by assumption), the BP is always better off
excluding~$t$ than including it.  Meanwhile, because
$\sigma(\valt) \le \valt$ and the payment rule is individually
rational, following the recommended bidding strategy always guarantees
nonnegative user utility.
\end{prevproof}

\section{The Welfare Achieved by DSIC and BPIC Mechanisms}

Theorem~\ref{t:main} shows that TFMs that are DSIC and BPIC must be
``trivial,'' in the sense that users are never charged for the
privilege of transaction inclusion.  The next result formalizes the
intuitive consequence that such TFMs may, if both users and the BP
follow their recommended actions, produce blocks with welfare
arbitrarily worse than the maximum possible.  (Recall that the
welfare~$W(B)$ of a block~$B$ is defined in
expression~\eqref{eq:welfare} in Section~\ref{ss:welfare}.)  That is,
no approximately welfare-maximizing TFM can be both DSIC and BPIC with
active BPs.  This result is not entirely trivial because the
conclusion of Theorem~\ref{t:main} imposes no restrictions on the
burning rule of a TFM.

\begin{theorem}[Impossibility of Non-Trivial Welfare Guarantees]\label{t:welfare}
Let $\tfm$ denote a TFM that is BPIC and DSIC with bidding
strategy~$\sigma$.
For every approximation factor~$\rho >
0$, there exists a BP valuation~$\vbp$, BP blockset~$\blocks$, 
block~$B^* \in \blocks$, and transactions with corresponding user
valuations $\vals$ such that
\[
W(B) \le \rho \cdot W(B^*),
\]
where~$B = \allocs(\sigma(\vals),\vbp,\blocks)$.
\end{theorem}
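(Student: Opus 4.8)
The plan is to use Theorem~\ref{t:main} to remove the payment rule entirely and then exploit the fact that the block producer's \emph{private} valuation is unconstrained. First I would invoke Theorem~\ref{t:main}: since the TFM is DSIC with strategy~$\sigma$ and BPIC, the payment rule~$\prices$ is identically zero on the range of~$\sigma$. Hence, when users bid $\sigma(\vals)$, each user pays nothing, the surplus of a block~$B$ is just $u_{BP}(B)=\vbp(B)-\burn(B,\bids)$, and its welfare is $W(B)=\vbp(B)+\sum_{t\in B}\valt$. Setting $S(B):=\burn(B,\bids)+\sum_{t\in B}\valt$, these give the key identity $u_{BP}(B)=W(B)-S(B)$. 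The crucial point is that $S$ depends only on the (fixed) bids, burns, and valuations of an instance, whereas $\vbp$ is an \emph{arbitrary real-valued} function; so for a fixed instance I can prescribe any welfare profile by taking $\vbp(B)=W_{\mathrm{target}}(B)-\sum_{t\in B}\valt$, and BPIC then forces the block producer to select a block maximizing $W_{\mathrm{target}}(B)-S(B)$.

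Given this identity the construction is short. I would use a single transaction~$t$ with valuation~$V>0$ and the downward-closed blockset $\blocks=\{\emptyset,\{t\}\}$, and suppose for the moment that the two blocks have \emph{distinct} values of~$S$. Let $B_{\min}$ and $B_{\max}$ be the blocks with smaller and larger~$S$. I assign welfare~$0$ to $B_{\min}$ (by setting $\vbp(B_{\min})=-\sum_{t\in B_{\min}}\valt$) and a small welfare $w\in\bigl(0,\,S(B_{\max})-S(B_{\min})\bigr)$ to $B_{\max}$. Then $u_{BP}(B_{\min})=-S(B_{\min})$ strictly exceeds $u_{BP}(B_{\max})=w-S(B_{\max})$, so the block producer selects $B=B_{\min}$, which has welfare~$0$, while $B^{*}:=B_{\max}$ witnesses $W(B^{*})=w>0$. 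Thus $W(B)=0\le\rho\,W(B^{*})$ --- in fact for every $\rho>0$ simultaneously.

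The main obstacle is justifying the ``distinct~$S$'' assumption for a fully unrestricted burning rule; this is precisely the difficulty flagged before the theorem, namely that Theorem~\ref{t:main} says nothing about~$\burn$. The hard part is ruling out the degenerate case where the two blocks have equal~$S$ for \emph{every}~$V$, i.e.\ $\burn(\emptyset,\sigma(V))-\burn(\{t\},\sigma(V))=V$ for all~$V$, meaning that including~$t$ always lowers the burn by exactly~$t$'s true value. I would dispatch this using DSIC. This relation already forces~$\sigma$ to be injective (two valuations with a common bid would give the same burn gap but different~$V$). Now consider a block producer with $\vbp(\emptyset)=c>0$ and $\vbp(\{t\})=0$: under the equal-$S$ relation the transaction is included exactly when the submitted bid~$b$ satisfies $\sigma^{-1}(b)\ge c$. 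A user with valuation $V<c$ is then excluded at its recommended bid $\sigma(V)$ (utility~$0$), yet by overbidding slightly above~$\sigma(c)$ it secures inclusion at price~$0$ (again by Theorem~\ref{t:main}, as this bid lies in the range of~$\sigma$), for utility $V>0$. This profitable deviation contradicts DSIC, so some instance must have blocks of distinct~$S$, which completes the argument. (I expect the positive-valuation special case in the introductory intuition --- a block producer preferring a low-value block it values slightly over a high-value block it does not --- to fall out as the subcase where $\emptyset$ already has the smaller~$S$.)
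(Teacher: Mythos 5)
Your proof is correct, and it takes a genuinely different route from the paper's. The paper also starts by invoking Theorem~\ref{t:main} to zero out payments, but then uses two transactions and three blocks ($B_0=\{\}$, $B_y=\{y\}$, $B_z=\{z\}$) with a \emph{nonnegative} BP valuation $\vbp(B_z)=\burn(B_z,\vals)+\eps$ that absorbs whatever burn the TFM demands, and splits into three cases according to how the recommended block varies with $\val_y$: if the BP always picks $B_z$, letting $\val_y\to\infty$ gives the ratio; if it ever picks $B_y$, user $y$ has a profitable misreport, contradicting DSIC; if it ever picks $B_0$, the ratio is $0$ versus positive. You instead work with one transaction and two blocks, isolate the identity $u_{BP}(B)=W(B)-S(B)$ with $S(B)=\burn(B,\bids)+\sum_{t\in B}\valt$, and exploit the freedom in $\vbp$ (including negative values, which the model's ``arbitrary real-valued'' valuations permit) to pin the chosen block's welfare to exactly $0$ while a positive-welfare block exists, so your bound holds for every $\rho>0$ simultaneously rather than in a limit. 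The essential work is the same in both arguments: DSIC is used to kill the one configuration the welfare calculation cannot handle---the paper's case (C2), and your degenerate case $S(\emptyset)\equiv S(\{t\})$, where the induced inclusion rule (include iff $\sigma^{-1}(b)\ge c$) lets a low-value user profitably overbid down to a price of zero. Two small points: (i) ``overbidding slightly above $\sigma(c)$'' should be phrased as bidding $\sigma(V'')$ for some $V''>c$, since injectivity of $\sigma$ gives you $\sigma^{-1}$ but not monotonicity, and the deviation bid must lie in the range of $\sigma$ for Theorem~\ref{t:main} to give a zero price; (ii) because your $\vbp$ takes negative values and is neither additive nor single-minded, your construction proves the theorem as stated but does not by itself deliver the paper's follow-up remark that the impossibility already holds for those restricted valuation classes, which is precisely what the paper's choice of $\vbp$ is engineered to show.
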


\begin{proof}
Let~$\tfm$ denote a TFM that is DSIC and BPIC.
By Theorem~\ref{t:main}, the
payment rule $\prices$ is identically zero on the range of its
recommended bidding strategy~$\sigma$.  We assume that (appealing to DSIC)
users always follow this bidding strategy~$\sigma$
and that (appealing to BPIC) the BP always chooses the block
recommended by the allocation rule~$\allocs$.
By the Revelation Principle (see footnote~\ref{foot:rp}), we can
assume that~$\sigma$ is the identity function.

Suppose there are two known transactions, $y$ and $z$, with arbitrary
positive user valuations~$\val_y$ and~$\val_z$.  Suppose the BP
blockset $\blocks$ comprises three
feasible blocks, $B_0 = \{\}$, $B_y = \{y\}$, and $B_z = \{z\}$.  Set
$\vbp(B_0) = \vbp(B_y) = 0$ and
\[
\vbp(B_z) = \burn(B_z,\vals)+\eps
\]
for some small $\eps > 0$.  Then, because the burning
rule~$\burn$ is nonnegative and the payment rule $\prices$ is
identically zero, the BP will choose the block
$B_z$ (i.e., $\allocs(\vals,\vbp,\blocks)=B_z$).

To complete the proof, we range over all valuation vectors of the form
$\vals'=(\val'_y,\val_z)$ 
and treat separately three
(non-exclusive but exhaustive) cases:
\begin{itemize}

\item [(C1)] Every choice of~$\val'_y$ leads the BP to choose~$B_z$
 (i.e., $\allocs(\vals',\vbp,\blocks)=B_z$ for all $\val'_y$).

\item [(C2)] Some choice of~$\val'_y$ leads the BP to choose~$B_y$
 (i.e., $\allocs(\vals',\vbp,\blocks)=B_y$).

\item [(C3)] Some choice of~$\val'_y$ leads the BP to choose the empty
block (i.e., $\allocs(\vals',\vbp,\blocks)=B_0$).

\end{itemize}

In case~(C1), because the BP always, no matter the value of $\val'_y$,
chooses the block~$B_z$ (with welfare $\val_z + \burn(B_z,\vals)+\eps$)
over the block~$B_y$ (with welfare $\val_y'$), letting $\val_y'$ tend
to infinity 
proves the desired result (with~$B=B_z$ and~$B^*=B_y$).  

Case~(C2) cannot occur, for if it did, the creator of transaction~$y$
would prefer to misreport its valuation (as $\val_y'$) when its true
valuation is $\val_y$, contradicting the assumption that the TFM
$\tfm$ is DSIC with the truthful bidding strategy. (Because~$\prices$
is identically~0 and~$\val_y > 0$, the creator of~$y$ always strictly
prefers inclusion to exclusion.)

Finally, in case~(C3), the result follows immediately from the facts
that~$W(B_0)$ is zero while~$W(B_y)$ and $W(B_z)$ are positive.
\end{proof}

\begin{remark}[Generalizations of Theorem~\ref{t:welfare}]
The proof of Theorem~\ref{t:welfare} shows that the result holds
already with
BPs that have additive or single-minded valuations. (As discussed in
Remark~\ref{rem:main}, Theorem~\ref{t:main} holds in both these cases,
and the BP valuation~$\vbp$ used in the proof of
Theorem~\ref{t:welfare} is both additive and single-minded).
A slight variation of the proof shows that the result holds more
generally for DSIC and BPIC TFMs that use a not-always-nonnegative
burning rule, under the same condition as in
Remark~\ref{rem:main}(b).
\end{remark}

\begin{remark}[Welfare Guarantees Under Stronger Assumptions]\label{rem:welfare_approx}
The root issue driving the proof of Theorem~\ref{t:welfare} is
that, in the absence of user payments, a BP may prefer a block of
low-value transactions from which it can extract a small amount of
value over a block of high-value transactions from which it cannot
extract any value.  Said differently, the preferences of users and the
BP over blocks may be very different, and without user payments
there's no tool available to align their interests.

Severely misaligned preferences, with user valuations dwarfing the
BP's valuation, are necessary for the impossibility result in
Theorem~\ref{t:welfare} to hold.  For example, suppose that the
intensity of the BP's preferences is at least somewhat commensurate
with that of the users, in the sense that, for some
parameter~$\beta > 0$,
\begin{equation}\label{eq:welfare_approx}
\vbp\left(B^{BP}\right) \ge \beta \cdot \sum_{t \in B^u} \valt,
\end{equation}
where~$B^{BP}$ is the block of~$\blocks$ that maximizes the BP's
private value and~$B^u$ is the block of~$\blocks$ that maximizes the
total user value of the included transactions.
In this case, the trivial (DSIC and BPIC) TFM $\tfm$ in which $\prices$
and~$\burn$ are identically zero, and with~$\allocs$ instructing the
BP to choose a BPS-maximizing block, will in fact produce a block
with welfare at
least $\beta/(\beta+1)$ times that of a welfare-maximizing block~$B^*$ 
of~$\blocks$.  (Proof sketch: combine~\eqref{eq:welfare_approx}
with the inequality $W(B^*) \le \vbp(B^{BP}) + \sum_{t \in B^u} \valt$
and rearrange.)
\end{remark}

\section{Future Directions}\label{s:conc}

What are the minimal additions to the model of TFMs studied in this
paper that allow blockchain transactions to be priced and processed in
a way that is incentive-compatible for all participating parties?  To
fire up the reader's imagination, we next summarize two of the many
ideas that are currently being actively investigated by the blockchain
research community (one cryptographic and the other economic).

\vspace{-.1in}

\paragraph{Encrypted mempools.} 
The rough idea of an encrypted mempool is to require users to 
submit transactions to BPs in an encrypted form. Transactions are
decrypted only after publication, so that a BP must commit to its
block without knowing the actual content of the included transactions.
In an ideal implementation, every transaction 
would ``look identical'' to the BP, effectively transforming it from
an active BP 
into a
passive one,
which in turn would
circumvent the impossibility result in Theorem~\ref{t:main}.  There
are a number of obstacles to achieving such an ideal implementation,
of course, such as the possibility of side information about
transactions gleaned by a BP from other sources.

\vspace{-.1in}

\paragraph{Competition among block producers.}
A different approach to restoring incentive-compatibility to TFMs with
active BPs is to force multiple BPs to compete with each other, either
for individual transactions or for entire blocks.  For example,
instead of a single monopolist BP, a blockchain protocol could
designate two or more candidate BPs, who would then bid for the right
to publish the next block.  The hope is that such competition would
lead to price discovery, with the winning bid revealing a good
approximation of the winning BP's private value for the block that it
published.  Price discovery would effectively expose the winning BP's
private valuation to the TFM, which could then be used as another
input to the TFM's payment and burning rules. This idea may suffice
to evade the impossibility result in Theorem~\ref{t:main}, the proof
of which relies on the assumption that a TFM has no knowledge about a
BP's private valuation.

\vspace{.1in}

We leave it as future work to flesh out the details of these and other
approaches to expanding the TFM design space and determine whether
they are indeed sufficient for the existence of non-trivial DSIC and
BPIC mechanisms.

\end{document}